\theoremstyle{plain}
\newtheorem{theorem}{Theorem}
\newtheorem{lemma}{Lemma}
\newtheorem{corollary}{Corollary}
\theoremstyle{definition}
\newtheorem{definition}{Definition}
\begin{document}
\sloppy
\title{Quantum Private Information Retrieval has linear communication complexity}
\author{\"Amin \surname{Baumeler}}
\email{amin.baumeler@usi.ch}
\affiliation{Faculty of Informatics, Universit\`a della Svizzera italiana, Lugano, Switzerland}
\author{Anne \surname{Broadbent}}
\email{abroadbe@uottawa.ca}
\affiliation{Department of Mathematics and Statistics, University of Ottawa, Ottawa, Canada}

\date{\today}

\begin{abstract}
In Private Information Retrieval (PIR), a client queries an $n$-bit database in order to retrieve an entry of her choice, while maintaining privacy of her query value. Chor, Goldreich, Kushilevitz, and Sudan showed that, in the information-theoretical setting, a linear amount of communication is required for classical PIR protocols (thus the trivial protocol is optimal). This linear lower bound was shown by Nayak to  hold also in the quantum setting.  Here, we extend Nayak's result by considering approximate privacy, and requiring security only against \emph{specious} adversaries, which are, in analogy to classical honest-but-curious adversaries,  the weakest reasonable quantum adversaries.
We show that, even in this weakened scenario,  Quantum Private Information Retrieval (QPIR) requires~$n$ qubits of communication.
From this follows that Le Gall's recent QPIR protocol with sublinear communication complexity is not information-theoretically private, against the weakest reasonable cryptographic adversary.
\end{abstract}

\maketitle

\newcommand{\norm}[1]{\left\lVert#1\right\rVert}
\newcommand{\abs}[1]{\left| #1 \right|}
\newcommand{\id}{\mathbbm{1}}
\newcommand{\bra}[1]{\langle #1  |}
\newcommand{\ket}[1]{|  #1 \rangle}
\newcommand{\braket}[2]{\langle #1 | #2 \rangle}
\newcommand{\ketbra}[2]{\ket{#1}\bra{#2}}
\newcommand{\egoketbra}[1]{\ketbra{#1}{#1}}
\newcommand{\tr}{\mathrm{tr}}
\newcommand{\gate}[1]{\mathsf{#1}}
\newcommand{\bigO}[1]{\mathcal{O}\left(#1\right)}
\newcommand{\littleo}[1]{o\left(#1\right)}
\newcommand{\party}[1]{\mathscr{#1}}
\newcommand{\partyA}{\mathscr{A}}
\newcommand{\partyB}{\mathscr{B}}
\newcommand{\pureA}{\bar{\partyA}}
\newcommand{\pureB}{\bar{\partyB}}
\newcommand{\advA}{\tilde{\partyA}}
\newcommand{\spaceA}{\mathcal{A}}
\newcommand{\spaceAtilde}{\tilde{\mathcal{A}}}
\newcommand{\spaceApure}{\bar{\mathcal{A}}}
\newcommand{\spaceB}{\mathcal{B}}
\newcommand{\spaceBpure}{\bar{\mathcal{B}}}
\newcommand{\spaceR}{\mathcal{R}}
\newcommand{\spaceDB}{\mathcal{D}}
\newcommand{\spaceC}{\mathcal{C}}
\newcommand{\spaceS}{\mathcal{S}}
\newcommand{\spaceX}{\mathcal{X}}
\newcommand{\spaceY}{\mathcal{Y}}
\newcommand{\spaceCcompressed}{\mathcal{C'}}
\newcommand{\spaceAdvA}{\tilde{\spaceA}}
\newcommand{\operationA}{\mathscr{A}}
\newcommand{\operationAtilde}{\tilde{\mathscr{A}}}
\newcommand{\operationApure}{\bar{\mathscr{A}}}
\newcommand{\operationB}{\mathscr{B}}
\newcommand{\inputState}{\rho_\text{in}}
\newcommand{\simulator}{\mathscr{S}}
\newcommand{\QPIRprot}{\Pi_\gate{QPIR}}
\newcommand{\QPIRprotpure}{\QPIRprot(\pureA,\pureB)}

\section{Introduction}
The cryptographic scheme of \emph{Private Information Retrieval} (PIR) describes the problem of querying a database without suffering a loss in privacy. It was formally defined in 1998 by Chor, Goldreich, Kushilevitz, and Sudan~\cite{CKGS98}.
Intuitively, not losing privacy through a query means that the database server does not learn anything about the client's input.
An interesting question is:
How much communication does a PIR protocol require?
Sending the whole database to the client is a trivial PIR protocol, but it seems unsatisfactory with respect to the amount of communication.
Are there better solutions?
In the setting of one database server and information-theoretic privacy,  the trivial protocol is optimal (even against honest-but-curious adversaries).
This result was shown by Chor, Goldreich, Kushilevitz, and Sudan~\cite{CKGS98}.

Quantum computation and quantum communication, compared to the classical model, allows for improved solutions to cryptographic tasks~\cite{BB84,Shor97}.
A natural question thus arises: \emph{does quantum information allow PIR protocols with  sublinear communication complexity?} We call a PIR protocol where we make use of quantum resources a Quantum Private Information Retrieval (QPIR) protocol.
Nayak answered this question in the negative~\cite{Nay99},  with a proof sketch establishing a reduction to \emph{random access encoding}.\footnote{It has been claimed~\cite{Gall12} that Nayak proved a lower bound for two-message quantum protocols only, when in fact,  his claim encompasses protocols with arbitrary interaction. We attribute this misunderstanding to the succinctness of Nayak's original write-up (the result and proof are only a few sentences long). }
There are also other fields where allowing quantum computation and communication failed to qualitatively improve the classical result.
An example is bit commitment: as we know, perfect bit commitment is not possible in the classical setting; in the quantum setting it is also not possible~\cite{LC97,May97} (but see Chailloux and Kerenidis~\cite{CK11} for some  quantum improvements that \emph{are} possible).

Recently, Le Gall presented a QPIR protocol with a sublinear amount of communication~\cite{Gall12}. This result holds for a database that exactly follows the protocol specification.
Motivated by this seemingly contradictory result, we study here the communication complexity of QPIR protocols that are secure against  \emph{specious} adversaries. As defined by Dupuis, Nielsen, and Salvail~\cite{DNS10}, specious adversaries may deviate from the protocol, but only in a way that is essentially indistinguishable from the honest behaviour---they are a quantum analogue of classical honest-but-curious adversaries, thus corresponding to the  weakest reasonable cryptographic adversaries.

\paragraph{Main result} We show that, even in the case of approximate privacy and approximate correctness, QPIR against specious adversaries has  linear communication complexity.\footnote{This result has appeared as part of the  M.\,Sc.~thesis of one of the authors~\cite{Baumeler12}.} This establishes that the adversarial model in Le Gall's analysis does not fulfill the weakest reasonable security definition and closes the topic of single-server, information-theoretic QPIR.

\paragraph{Related Work}
 Nayak's lower bound was generalized by Jain, Radhakrishnan, and  Sen~\cite{JRS02}, who showed a trade-off for the loss in privacy between the client and database.
 QPIR has also been studied in the scenario of multiple servers~\cite{KdW04sym,KdW04}, in the scenario of symmetric privacy~\cite{KdW04sym},
as well as in the scenario where a cheating server is detected~\cite{GLM08}.
Recently, a practical symmetric QPIR protocol which is not unconditionally secure was developed~\cite{JSG10}.

\subsection{Specious adversaries}
\label{sec:introspecious}
We call a party which follows a protocol \emph{honest}.
A \emph{correct} protocol is a protocol that achieves its task, given that all the parties are honest.
Clearly, every meaningful protocol has to be correct.
If we now try to restrict the actions of an adversary as much as possible, we cannot violate the correctness requirement.
This means, the weakest adversary has to appear honest.

In classical cryptography there exists a notion of \emph{honest-but-curious} adversaries, which models the above description. Such adversaries follow the protocol (honesty), but record everything they see and try to extract a secret (curiosity).
A classical honest-but-curious adversary can do nothing more to break the privacy property of a protocol.

Dupuis, Nielsen, and Salvail followed this spirit and introduced the quantum analog to the honest-but-curious adversaries, and called them \emph{specious}~\cite{DNS10}. The honesty property, as well as the curiosity property cannot be translated one-to-one from the classical  to the quantum case.
To get a meaningful model, the definition needs to capture the essence of being the weakest adversary, as described above.

\looseness=-1
Attempting a translation from the classical  to the quantum case, we see that
a quantum adversary can also follow the protocol to be honest.
Curiosity means to copy everything the adversary sees and extracting a secret from it. In general, copying is not possible because of the no-cloning theorem~\cite{NOCLONING}. Therefore, a protocol can force a quantum honest-but-curious adversary to forget. This motivates the need for a security guarantee not only at the end of the protocol, but also \emph{during} the interaction.

Quantum adversaries, on the other hand, can act in a way indistinguishable from an honest party.
As an example, in some protocols it may be possible to delay measurements. This means, the adversary skips a measurement instruction and continues in superposition. At a later point in the protocol, if required, the adversary can perform the measurement, making it look like it was honest all the time.
In other words, at any step during the execution of the protocol, we specify that the adversary should be able to provide
 some state that, when joined with the state held by the honest party, is indistinguishable from the joint state of an honest interaction.
This is the essence of the definition of specious adversaries, which we define formally in Section~\ref{sec:specious-defn}.

 While the concept of specious adversaries is not yet in widespread use in the quantum cryptographic community, \emph{purification} attacks and the related \emph{purified adversaries} have long been known to present subtle challenges unique to the quantum world.
   Charles Bennett and Gilles Brassard  were among the first to draw attention to this type of attack, proposing a quantum bit commitment scheme, together with an explicit purification attack~\cite{BB84}.
  Purified adversaries (who can be seen as delaying their input choices by sending entangled states) are easily seen to be a special case of specious adversaries.

\subsection{Le Gall's QPIR protocol}
\label{sec:introlegall}
Recently, Le Gall presented a QPIR protocol with information-theoretic privacy~\cite{Gall12}.
His protocol achieves a communication complexity of $\bigO{\sqrt{n}}$, where~$n$ is the database size in bits.
At  first glance, this result seems to beat Nayak's lower bound of~$n$.
However, the price for this lower communication complexity is that the server must follow the protocol precisely.
Hence, Le Gall considers a different model of adversaries. One naturally wonders if such gains can be achieved by specious adversaries. Our main result (Theorem~\ref{thm:result}) rules out this possibility.

\section{Formal Model and Security Definitions}

In this section, we formally define our model and notions of correctness and privacy.  First, we give some basic notation.

\subsection{Notation}
\label{sec:notation}
We use calligraphic symbols to describe Hilbert spaces.
Subscripts of quantum states and quantum operations usually denote the associated Hilbert spaces.
Let~$\spaceA$ and~$\spaceB$ be two Hilbert spaces.
By~$\spaceA\otimes\spaceB$ we denote the joint Hilbert space.
The set~$L(\spaceA,\spaceB)$ is the set of all linear maps from~$\spaceA$ to~$\spaceB$.
The set~$L(\spaceA)=L(\spaceA,\spaceA)$ is the set of all linear maps on~$\spaceA$.
A quantum state is either expressed as a ket~$\ket{x}$ or as a density operator~$\rho$.
Every state than can be written as~$\ket{x}$ is pure, with corresponding density operator~$\ket{x}\bra{x}$.
The set~$S(\spaceA)$ is the set of all density operators on~$\spaceA$.
The identity operator on the space~$\spaceA$ is~$\id_\spaceA \in L(\spaceA)$,
for a joint space~$\spaceA\otimes\spaceB$ we use~$\id_{\spaceA,\spaceB}$.
An operator~$\gate{U} \in L(\spaceA)$ is called unitary, if~$\gate{U}^\dagger\gate{U}=\id$.
In the expression~$\spaceA \approx \spaceB$, the symbol~$\approx$ denotes that the dimension of~$\spaceA$ equals the dimension of~$\spaceB$ (i.e.,~$\dim(\spaceA) = \dim(\spaceB)$).
The measurement outcome of a measurement~$\mathcal{M}$ of a density operator~$\rho$ is expressed by~$\mathcal{M}(\rho)$.

Let~$\rho,\sigma \in S(\spaceA)$ be two density operators.
We denote by~$\norm{\rho}_1=\tr\abs{\rho}$ the \emph{trace norm} of the density operator~$\rho$.
The \emph{trace distance} between the two density operators~$\rho$ and~$\sigma$ is defined as
\begin{align}
  \Delta(\rho,\sigma) := \frac{1}{2} \norm{\rho-\sigma}_1
  \,.
\end{align}
If~$\rho=\ket{x}\bra{x}$ and~$\sigma=\ket{y}\bra{y}$ are pure, then we use the compact notation~$\Delta(\ket{x},\ket{y})$ interchangeably with~$\Delta(\ket{x}\bra{x},\ket{y}\bra{y})$.

\subsection{Protocol definition}
\label{sec:protocol}
As mentioned in Section~\ref{sec:introspecious}, when defining security against specious adversaries, we must examine the \nohyphens{system} held by the adversary \emph{during} the protocol. To this end, we first formally define a two-party quantum \emph{protocol}. We base our definition on   the strategy formalism of Gutoski and Watrous~\cite{GW07}, as well as on a definition from Dupuis, Nielsen and Salvail~\cite{DNS10} (our scenario is simpler since our protocols do not make any explicit oracle calls). Without loss of generality, we assume that party~$\partyA$ sends the first and last messages.

\begin{definition}[Two-party quantum protocol]
An
\emph{$s$-round, two-party protocol}
denoted
$\Pi=\left(\partyA,\partyB,s\right)$
consists of:
\begin{enumerate}
  \itemsep0em
  \item	input spaces~$\spaceA_0$ and~$\spaceB_0$ for parties~$\partyA$ and~$\partyB$ \nohyphens{respectively},
  \item	memory spaces $\spaceA_1,\dots,\spaceA_s$ for~$\partyA$ and $\spaceB_1,\dots,\spaceB_s$  for~$\partyB$ and communication spaces $\spaceX_1,\dots,\spaceX_s$, $\spaceY_1,\dots,\spaceY_{s-1}$,
  \item	an $s$-tuple of quantum operations $(\operationA_1,\dots,\operationA_s)$ for~$\partyA$, where $\operationA_1:L(\spaceA_{0}) \mapsto L(\spaceA_1 \otimes \spaceX_1)$, and   $\operationA_i:L(\spaceA_{i-1} \otimes \spaceY_{i-1}) \mapsto L(\spaceA_i \otimes \spaceX_i)$, $(2 \le i \le s)$,
  \item	an $s$-tuple of quantum operations $(\operationB_1,\dots,\operationB_s)$ for~$\partyB$, where $\operationB_i:L(\spaceB_{i-1} \otimes  \spaceX_i) \mapsto L(\spaceB_i \otimes \spaceY_{i})$, $(1\le i \le s-1)$, and $\operationB_s:L(\spaceB_{s-1} \otimes  \spaceX_s) \mapsto L(\spaceB_s)$.
\end{enumerate}
\end{definition}

If $\Pi=(\partyA,\partyB,s)$ is an $s$-round two-party protocol, we define the state after the  $i$-th step $(1 \leq i \leq 2s)$, and upon input state $\inputState \in S(\spaceA_0 \otimes \spaceB_0 \otimes \spaceR)$,
where $\spaceR$ is a system of dimension
\mbox{$\dim(\spaceR)=\dim(\spaceA_0)\dim(\spaceB_0)$},
as
\begin{align}
\rho_i\left(\inputState\right) :=
	(
    \operationA_{{(i+1)}/{2}} \otimes \id_{\spaceB_{{(i-1)}/{2}},\spaceR}
  )
  \ldots\notag\\ \left(
    \operationB_1 \otimes \id_{\spaceA_1,\spaceR}
  \right)
  \left(
    \operationA_1 \otimes \id_{\spaceB_0,\spaceR}
  \right)
  \left(
    \inputState
  \right)
  \,,
\end{align}
for~$i$ odd, and
\begin{align}
\rho_i\left(\inputState\right) :=
	(
    \operationB_{i/2} \otimes \id_{\spaceA_{i/2},\spaceR}
  )
  \ldots\notag\\ \left(
    \operationB_1 \otimes \id_{\spaceA_1,\spaceR}
  \right)
  \left(
    \operationA_1 \otimes \id_{\spaceB_0,\spaceR}
  \right)
  \left(
    \inputState
  \right)\,,
\end{align}
for~$i$ even.
Note that the last round (round~$s$) is only partial, since  $\operationB_s:L(\spaceB_{s-1} \otimes  \spaceX_s) \mapsto L(\spaceB_s)$. We define the final state of protocol $\Pi=(\partyA,\partyB,s)$,  upon input state $\inputState \in S(\spaceA_0 \otimes \spaceB_0 \otimes \spaceR)$ as:
\begin{align}
  &\left[\partyA \circledast \partyB\right]\left(\inputState\right)
  := \rho_{2s}\left(\inputState\right)  \,.
\end{align}
The \emph{communication complexity} of \mbox{$\Pi=(\partyA,\partyB,s)$} is the total amount of quantum communication in the protocol (counted in terms of qubits), as given by $\sum_{i=1}^s \log(\dim(\spaceX_i)) +  \sum_{i=1}^{s-1} \log(\dim(\spaceY_i))$.

\looseness=-1
Given a protocol $\Pi=\left(\partyA,\partyB,s\right)$, an \emph{adversary~$\advA$ for~$\partyA$} is an $s$-tuple of quantum operations $(\operationAtilde_1,\dots,\operationAtilde_s)$, where $\operationAtilde_1:L(\spaceA_{0}) \mapsto L(\spaceAtilde_1 \otimes \spaceX_1)$, and  \mbox{$\operationAtilde_i:L(\spaceAtilde_{i-1} \otimes \spaceY_{i-1}) \mapsto L(\spaceAtilde_i \otimes \spaceX_i)$}, $(2 \le i \le s)$, with $\spaceAtilde_1,\dots,\spaceAtilde_s$ being $\advA$'s memory spaces.
We denote the final state of a protocol run with an adversary~$\advA$ by~$[\advA \circledast \partyB]\left(\inputState\right)$.
The state after the~$i$-th step of a protocol run with an adversary~$\advA$ is~$\tilde{\rho}_i(\advA,\inputState)$.

A special type of adversary for a protocol \mbox{$\Pi=\left(\partyA,\partyB,s\right)$},   is a \emph{purified adversary,  $\pureA$ for~$\partyA$}  that is described by \emph{unitaries}  $(\operationApure_1,\dots,\operationApure_s)$, where $\operationApure_1:L(\spaceA_{0}\otimes \spaceApure_{0}) \mapsto L(\spaceA_1 \otimes \spaceApure_1 \otimes \spaceX_1)$ and $\operationApure_i:L(\spaceA_{i-1} \otimes \spaceApure_{i-1} \otimes \spaceY_{i-1}) \mapsto L(\spaceA_i \otimes \spaceApure_i \otimes \spaceX_i)$, $(2 \leq i \leq s)$, with auxiliary space $\spaceApure_{0}$ of sufficiently large dimension being initialized to the zero state. We refer to $\spaceApure_1, \ldots ,\spaceApure_s $ as the purifying spaces and specify that tracing out the purifying space reverts the state to a state from the original protocol; in particular, this holds for the final state of the protocol:  $\tr_{\spaceApure_s}[\pureA \circledast \partyB]\left(\inputState\right) =[\partyA \circledast \partyB]\left(\inputState\right)$ for all $\inputState$.
It is not hard to see that such adversaries always exist (see, for instance Gutoski and Watrous~\cite{GW07}).  The definition of a \emph{purified adversary,  $\pureB$ for~$\partyB$}
 is obtained similarly as the definition for~$\pureA$. In particular, $\Pi=\left(\pureA,\pureB,s\right)$ denotes the protocol where both parties $\partyA$ and $\partyB$ are purified.

\subsection{Specious adversaries}
\label{sec:specious-defn}
Recall  the intuition that a specious adversary should be able, at each step of the protocol, to produce a state that, when joined with the honest party's state, is close (in trace distance) to the joint state of an honest execution of the protocol. Dupuis, Nielsen, and Salvail~\cite{DNS10} give a definition for specious adversaries in the most general context. For the purposes of QPIR in our model, the following is an equivalent definition; below we also define \emph{ultimately specious} adversaries, which are adversaries that satisfy the criteria for speciousness at the last step of the protocol.

\begin{definition}[specious adversaries]
	Let~$\Pi=(\partyA,\partyB,s)$ be an~$s$-round two-party protocol.
	We say that an adversary  $\advA$ for~$\partyA$ is~\emph{$\varepsilon$-specious},
	if there exists a sequence of quantum operations $(\mathscr{F}_1, \dots, \mathscr{F}_{2s})$, such that for all~$1 \le i \le 2s$ and for all~$\inputState \in S(\spaceA_0 \otimes \spaceB_0 \otimes \spaceR)$,
		\begin{enumerate}
			\item  \begin{equation}\label{item:specious-1} \mathscr{F}_i:
\begin{cases} L(\spaceAdvA_{(i+1)/2}) \mapsto L(\spaceA_{(i+1)/2} ),& i \text{ even}\\
                L(\spaceAdvA_i \otimes \spaceX_{i/2 +1}) \mapsto L(\spaceA_i  \otimes \spaceX_{i/2+1}), & i \text{ odd}
 \end{cases}
\end{equation}
			\item  \label{item:specious-2}for every input state~$\inputState \in S(\spaceA_0 \otimes \spaceB_0 \otimes \spaceR)$,
			\begin{align}
				\Delta((\mathscr{F}_i \otimes \id_{\spaceB_i,\spaceR})(\tilde{\rho}_i(\advA,\inputState)), \rho_i(\inputState))
				\le \varepsilon
				\,.
			\end{align}
		\end{enumerate}
We call an adversary~$\advA$ for $\partyA$ \emph{ultimately $\varepsilon$-specious}, if there exists a quantum operation $\mathscr{F}: L(\spaceAdvA_{s} ) \mapsto L(\spaceA_{s} ) $, such that  for every input state~$\inputState \in S(\spaceA_0 \otimes \spaceB_0 \otimes \spaceR)$,
\begin{equation}\Delta((\mathscr{F} \otimes \id_{\spaceB_{s},\spaceR})(\left[\advA \circledast \partyB\right]\left(\inputState\right) ), 	\left[\partyA \circledast \partyB\right]\left(\inputState\right))
				\le \varepsilon
				\,.
\end{equation}
\end{definition}

\subsection{Definitions for QPIR}
Using the formalism developed so far, we now define QPIR protocols. In particular, we define a notion of approximate correctness, together with a notion of approximate privacy against specious servers;
\emph{correctness} refers to the notion that the client should obtain the correct outcome at the end of the protocol, while \emph{privacy} refers to the notion that the server should learn essentially nothing about the client's input via its interaction with the client. For specious adversaries, this  corresponds to the intuitive notion that the server's local density matrix at each step of the protocol should be independent of the client's input~$i$; in other words, there must exist at each step of the protocol, a quantum map $\simulator$ that has access only to the server's input register and that reproduces, or \emph{simulates} the server's local view. This is the standard ideal-real world simulation-based security notion, that is simplified to the QPIR setting and required only for specious adversaries.

We also consider \emph{ultimate} privacy (i.e., the privacy condition holds at the end of the protocol) against purified servers, which is sufficient in order to show our result.

\label{sec:QPIRprot}
\begin{definition}[QPIR protocol]
An \emph{$s$-round, $n$-bit Quantum Private Information Retrieval protocol}
is a two-party protocol
$\QPIRprot=\left(\partyA,\partyB,s\right)$,
where~$\partyA$ is the server and~$\partyB$ is the client.

We call~$\QPIRprot$ \emph{$(1-\delta)$-correct} if, for all inputs \mbox{$\inputState= \ket{x}\bra{x}_{\spaceA_0} \otimes \ket{i}\bra{i}_{\spaceB_0}$}, with $x = x_1,\ldots, x_n \in \{0,1\}^n$ and $i \in \{1, \ldots, n\}$, there exists a measurement~$\mathcal{M}$ with outcome~$0$ or $1$, such that:
\begin{equation}
\Pr[\mathcal{M}\left(\tr_{\spaceA_s}  	\left[\partyA \circledast \partyB\right]\left(\inputState\right)\right)  = x_i] \geq 1-\delta\,.
\end{equation}

  We call~$\QPIRprot$ \emph{$(1-\varepsilon)$-private against $\gamma$-specious servers} if
  for every $\gamma$-specious server $\advA$,
  there exists a sequence of quantum operations $\simulator_1, \ldots ,\simulator_{s-1}$ where
  \mbox{$\simulator_i : L(\spaceA_0) \mapsto L(\spaceAdvA_i \otimes \spaceY_{i})$}, such that for all $1 \leq i \leq s-1$  and for all~$\inputState \in S(\spaceA_0 \otimes \spaceB_0 \otimes \spaceR)$,
  \begin{align}
    \label{eq:speciousprivacy}
    \Delta\left( \tr_{\spaceB_0}\left( (\simulator_i \otimes \id_{\spaceB_0,\spaceR})(\inputState)\right), \tr_{\spaceB_i}\left(\tilde{\rho}_i(\advA,\inputState)\right) \right) \leq \varepsilon
    \,,
  \end{align}

  We call~$\QPIRprot$ \emph{ultimately $(1-\varepsilon)$-private against purified servers} if
  for every purification~$\pureA$ of the server~$\partyA$
  there exists a quantum operation
  $\simulator : L(\spaceA_0) \mapsto L(\spaceA_s \otimes \spaceApure_s)$, such that  for all~$\inputState \in S(\spaceA_0 \otimes \spaceB_0 \otimes \spaceR)$,
  \begin{align}
    \label{eq:privacy}
    \Delta\left(	\tr_{\spaceB_0} \left(
  	  \simulator\otimes\id_{\spaceB_0,\spaceR}
  	\right)
  	(\inputState),
   \tr_{\spaceB_s}
  	[\pureA \circledast \partyB]\left(\inputState\right)
    \right)
    \le
    \varepsilon
    \,.
  \end{align}
\end{definition}

\section{Tools}
\label{sec:preliminaries}
In this section, we present definitions and results that are used in the proof of our main result.

\subsection{Entropy}
\begin{definition}[Shannon entropy]
  Let $P_X$ be a probability distribution over the alphabet $X$. Then the Shannon entropy
  $H(P_X)$ of~$P_X$ is
  \begin{align}
    H(P_X) := -\sum_{x\in X} P_X(x) \log\left(P_X(x)\right)
    \,.
  \end{align}
\end{definition}
The Shannon entropy of a binary random variable is called \emph{binary entropy}:
\begin{definition}[binary entropy]
  Let~$p$ be the probability of an event of a binary random variable. Then the binary entropy~$H_\text{bin}(p)$ of~$p$ is
  \begin{align}
    H_\text{bin}(p) := -p\log p -(1-p)\log(1-p)
    \,.
  \end{align}
\end{definition}

\subsection{Trace distance and fidelity}

We have already encountered the \emph{trace norm} and \emph{trace distance} in Section~\ref{sec:notation}. Another measure of distance between two density operators~$\rho$ and~$\sigma$ is the \emph{fidelity}, defined as
\begin{align}
  F(\rho,\sigma) := \norm{\rho^{\frac{1}{2}}\sigma^{\frac{1}{2}}}_1
  \,.
\end{align}
For pure states~$\ket{x}$, and~$\ket{y}$ we define~$F(\ket{x},\ket{y})$ as~$F(\egoketbra{x},\egoketbra{y})$.
The following Lemma simplifies the calculation of the fidelity.
\begin{lemma}[Uhlmann's Lemma]
  \label{lemma:uhlmann}
  The fidelity between $\rho_\spaceA\in S(\spaceA)$ and $\sigma_\spaceA\in S(\spaceA)$ is
  \begin{align}
    F(\rho_\spaceA,\sigma_\spaceA)
    &=
    \max_{\ket{\varphi}_{\spaceA,\spaceB},\ket{\psi}_{\spaceA,\spaceB}}
    F\left(
      \ket{\varphi}_{\spaceA,\spaceB},
      \ket{\psi}_{\spaceA,\spaceB}
    \right)
    \\&=
    \max_{\ket{\varphi}_{\spaceA,\spaceB},\ket{\psi}_{\spaceA,\spaceB}}
    \abs{\braket{\varphi}{\psi}_{\spaceA,\spaceB}}
    \,,
  \end{align}
  where the maximum is taken over all purifications of $\rho_\spaceA=\tr_\spaceB\egoketbra{\varphi}_{\spaceA,\spaceB}$
  and over all purifications of \mbox{$\sigma_\spaceA=\tr_\spaceB\egoketbra{\psi}_{\spaceA,\spaceB}$}.
\end{lemma}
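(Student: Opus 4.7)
The plan is to reduce the claimed maximum to the operator identity $\max_{U\text{ unitary}}|\tr(AU)| = \norm{A}_1$ applied to $A = \sqrt{\rho}\sqrt{\sigma}$. The equality $F(\ket{\varphi},\ket{\psi}) = |\braket{\varphi}{\psi}|$ follows immediately from unpacking the definition of fidelity on rank-one states, so the real content is the first equality.

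For the reduction, I would parametrize purifications by unitaries on the purifying register, using the standard fact that any two purifications of the same state on $\spaceA\otimes\spaceB$ (with $\dim(\spaceB)$ at least the rank of the state) differ by a unitary acting on $\spaceB$ alone. Fix canonical purifications $\ket{\varphi_0} = (\sqrt{\rho}\otimes\id)\ket{\Omega}$ of $\rho_\spaceA$ and $\ket{\psi_0} = (\sqrt{\sigma}\otimes\id)\ket{\Omega}$ of $\sigma_\spaceA$, where $\ket{\Omega} = \sum_i \ket{i}_{\spaceA}\ket{i}_{\spaceB}$ is the unnormalized maximally entangled vector (taking $\spaceA \approx \spaceB$ without loss of generality). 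Then every pair of purifications takes the form $\ket{\varphi} = (\id\otimes U_\rho)\ket{\varphi_0}$ and $\ket{\psi} = (\id\otimes U_\sigma)\ket{\psi_0}$, and a short computation via the ricochet identity $\bra{\Omega}(X\otimes Y)\ket{\Omega} = \tr(XY^T)$ gives
\begin{align}
\braket{\varphi}{\psi} = \tr\!\bigl(\sqrt{\rho}\sqrt{\sigma}\,V\bigr),
\end{align}
where $V = (U_\rho^\dagger U_\sigma)^T$ ranges over all unitaries on $\spaceA$ as $U_\rho, U_\sigma$ do.

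The required operator identity is then standard: from the polar decomposition $A = W|A|$ we have $|\tr(AV)| = |\tr(|A|VW)| \le \norm{A}_1$ (using $|\tr(B)|\le\norm{B}_1$ and unitary invariance of $\norm{\cdot}_1$), with equality attained at $V = W^\dagger$. Substituting $A = \sqrt{\rho}\sqrt{\sigma}$ yields $\max|\braket{\varphi}{\psi}| = \norm{\sqrt{\rho}\sqrt{\sigma}}_1 = F(\rho,\sigma)$, as desired.

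The main subtlety, though a minor one, is the rank-deficient case: when $\rho$ or $\sigma$ has a nontrivial kernel, unitaries acting on that kernel leave the canonical purification unchanged, and the polar isometry $W$ of $\sqrt{\rho}\sqrt{\sigma}$ is only partial. Neither issue affects the conclusion, since $W$ can always be extended to a full unitary achieving the supremum and the parametrization still sweeps out enough of the unitary group to realize this extension, but both warrant the explicit bookkeeping check that I would carry out.
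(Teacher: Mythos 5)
Your argument is correct: it is the canonical proof of Uhlmann's theorem (parametrize purifications via unitaries on the purifying register through the maximally entangled vector, reduce to $\max_{V}\abs{\tr(\sqrt{\rho}\sqrt{\sigma}\,V)}=\norm{\sqrt{\rho}\sqrt{\sigma}}_1$ via polar decomposition), and you correctly flag the only bookkeeping issues, namely the rank-deficient case and the implicit assumption $\spaceA\approx\spaceB$ for the purifying register. The paper itself states this lemma as a standard imported tool and gives no proof, so there is nothing to compare against; your proof fills that gap in the standard way.
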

Recall the Fuchs-van de Graaf inequalities, relating the fidelity to the trace distance:
\begin{lemma}[Fuchs-van de Graaf inequalities]
  \label{lemma:fuchsvandegraaf}
  Let \mbox{$\rho,\sigma\in S(\spaceA)$} be density operators, then
  \begin{align}
    1-F(\rho,\sigma) \le \Delta(\rho,\sigma) \le \sqrt{1-F(\rho,\sigma)^2}
    \,.
  \end{align}
\end{lemma}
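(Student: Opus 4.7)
The plan is to handle the upper and lower inequalities separately: the upper bound reduces to the pure-state case via Uhlmann's Lemma, while the lower bound follows from a measurement-based reduction to classical probability distributions.

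For the upper bound $\Delta(\rho,\sigma) \le \sqrt{1-F(\rho,\sigma)^2}$, I would first establish the pure-state identity $\Delta(\ket{\varphi},\ket{\psi}) = \sqrt{1-\abs{\braket{\varphi}{\psi}}^2}$. Writing $\ket{\psi}=c\ket{\varphi}+\sqrt{1-\abs{c}^2}\ket{\varphi^\perp}$ with $c=\braket{\varphi}{\psi}$ confines $\egoketbra{\varphi}-\egoketbra{\psi}$ to a two-dimensional subspace, and its two nonzero eigenvalues $\pm\sqrt{1-\abs{c}^2}$ determine the trace norm. With this in hand, invoke Lemma~\ref{lemma:uhlmann} to pick purifications $\ket{\varphi}_{\spaceA,\spaceB}, \ket{\psi}_{\spaceA,\spaceB}$ of $\rho$ and $\sigma$ achieving $\abs{\braket{\varphi}{\psi}}=F(\rho,\sigma)$. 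Since the partial trace is a quantum operation and trace distance is contractive under quantum operations, one concludes $\Delta(\rho,\sigma)\le\Delta(\ket{\varphi},\ket{\psi})=\sqrt{1-F(\rho,\sigma)^2}$.

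For the lower bound $1-F(\rho,\sigma)\le\Delta(\rho,\sigma)$, I would appeal to the variational characterizations of both quantities in terms of POVMs $\{E_m\}$ with outcome distributions $p_m=\tr(E_m\rho)$ and $q_m=\tr(E_m\sigma)$: trace distance equals $\sup_{\{E_m\}}\tfrac{1}{2}\sum_m\abs{p_m-q_m}$ (Helstrom's theorem), and fidelity equals $\inf_{\{E_m\}}\sum_m\sqrt{p_m q_m}$ (a dual consequence of Uhlmann's Lemma). Fix a POVM attaining the fidelity infimum, so $F(\rho,\sigma)=\sum_m\sqrt{p_m q_m}$. The identity $1-\sum_m\sqrt{p_m q_m}=\tfrac{1}{2}\sum_m(\sqrt{p_m}-\sqrt{q_m})^2$, combined with the elementary bound $(\sqrt{p_m}-\sqrt{q_m})^2\le\abs{\sqrt{p_m}-\sqrt{q_m}}(\sqrt{p_m}+\sqrt{q_m})=\abs{p_m-q_m}$, yields $1-F(\rho,\sigma)\le\tfrac{1}{2}\sum_m\abs{p_m-q_m}\le\Delta(\rho,\sigma)$, the last step using Helstrom's characterization.

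The main obstacle is establishing the fidelity-as-infimum-over-POVMs characterization; this itself leans on Uhlmann's Lemma together with a careful purification argument, and is the only nontrivial ingredient. Once both variational formulas are in hand, everything else is elementary algebra on nonnegative reals, and both inequalities reduce transparently to their well-known classical analogs relating the total variation distance and the Bhattacharyya coefficient of two probability distributions.
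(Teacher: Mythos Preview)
The paper does not prove Lemma~\ref{lemma:fuchsvandegraaf}; it is stated with the preface ``Recall the Fuchs--van de Graaf inequalities'' and treated as a standard background result, with no proof given. There is therefore nothing in the paper to compare your argument against.

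That said, your proposal is a correct and standard proof. The upper bound via Uhlmann's Lemma and monotonicity of the trace distance under partial trace is the usual route, and your pure-state computation is accurate. For the lower bound, the measurement-based reduction is also standard; note that you do not actually need the full variational characterization $F(\rho,\sigma)=\inf_{\{E_m\}}\sum_m\sqrt{p_m q_m}$ to be an \emph{attained} infimum---it suffices to know that the infimum is achieved by some POVM (Fuchs' construction), or alternatively one can run the same algebra on a near-optimal POVM and take a limit. Your identification of this characterization as the only nontrivial ingredient is apt.
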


The following lemma states that, given two density matrices that are close in trace distance, it is possible, by acting only on the purifying subspace, to transform a purification of one of the density matrices into an approximate version of a purification of the other.

\begin{lemma}
  \label{lemma:uhlmannfuchsvandegraaf}
  Let $\rho_\spaceA,\sigma_\spaceA \in S(\spaceA)$ be two $\varepsilon$-close density operators, such that
  \begin{align}
    \Delta(\rho_\spaceA,\sigma_\spaceA) \le \varepsilon
  \end{align}
  with respective purifications~$\ket{\varphi}_{\spaceA,\spaceB}$ and~$\ket{\psi}_{\spaceA,\spaceB}$.
  Then there exists a unitary~$\gate{U}_\spaceB$ acting solely on~$\spaceB$, such that
  \begin{align}
    \Delta
    \left(
      \ket{\varphi}_{\spaceA,\spaceB},
      \left(
        \id_{\spaceA} \otimes \gate{U}_\spaceB
      \right)
      \ket{\psi}_{\spaceA,\spaceB}
    \right)
    \le
    \sqrt{\varepsilon(2-\varepsilon)}
    \,.
  \end{align}
\end{lemma}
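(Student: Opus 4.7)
The plan is to chain together the two preceding lemmas: use Fuchs--van de Graaf to convert the trace-distance hypothesis into a fidelity lower bound, use Uhlmann to realise that fidelity as an inner product between purifications (with one purification fixed), and then convert back to trace distance via Fuchs--van de Graaf on pure states.

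First I would invoke Lemma~\ref{lemma:fuchsvandegraaf} on the hypothesis $\Delta(\rho_{\spaceA},\sigma_{\spaceA})\le\varepsilon$ to obtain $F(\rho_{\spaceA},\sigma_{\spaceA})\ge 1-\varepsilon$. Next, I would apply Uhlmann's lemma (Lemma~\ref{lemma:uhlmann}) in its ``one purification fixed'' form: keeping $\ket{\varphi}_{\spaceA,\spaceB}$ as given, there is some purification $\ket{\psi'}_{\spaceA,\spaceB}$ of $\sigma_{\spaceA}$ that attains the maximum, so $|\braket{\varphi}{\psi'}_{\spaceA,\spaceB}|=F(\rho_{\spaceA},\sigma_{\spaceA})\ge 1-\varepsilon$. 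Here I am tacitly using that $\spaceB$ is large enough to carry an optimal purification, which holds because $\ket{\psi}$ already purifies $\sigma_{\spaceA}$ on $\spaceB$.

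Now comes the step that makes the unitary appear: any two purifications of the same density operator on the same purifying space are related by a unitary acting solely on the purifying factor. Applying this to $\ket{\psi}_{\spaceA,\spaceB}$ and $\ket{\psi'}_{\spaceA,\spaceB}$, which both purify $\sigma_{\spaceA}$ on $\spaceA\otimes\spaceB$, yields a unitary $\gate{U}_{\spaceB}$ on $\spaceB$ with $(\id_{\spaceA}\otimes\gate{U}_{\spaceB})\ket{\psi}_{\spaceA,\spaceB}=\ket{\psi'}_{\spaceA,\spaceB}$. Consequently,
\begin{align}
|\braket{\varphi}{(\id_{\spaceA}\otimes \gate{U}_{\spaceB})\psi}_{\spaceA,\spaceB}|
=|\braket{\varphi}{\psi'}_{\spaceA,\spaceB}|\ge 1-\varepsilon.
\end{align}

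Finally I would apply the upper bound of Lemma~\ref{lemma:fuchsvandegraaf} to the two pure states $\ket{\varphi}_{\spaceA,\spaceB}$ and $(\id_{\spaceA}\otimes\gate{U}_{\spaceB})\ket{\psi}_{\spaceA,\spaceB}$, noting that for pure states the fidelity equals the absolute value of the inner product. This yields
\begin{align}
\Delta\bigl(\ket{\varphi}_{\spaceA,\spaceB},(\id_{\spaceA}\otimes\gate{U}_{\spaceB})\ket{\psi}_{\spaceA,\spaceB}\bigr)
\le\sqrt{1-(1-\varepsilon)^2}=\sqrt{\varepsilon(2-\varepsilon)},
\end{align}
which is the claimed bound.

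The only subtle point I expect is the ``one purification fixed'' version of Uhlmann together with the fact that $\ket{\psi}$ and the optimal $\ket{\psi'}$ live on the same purifying space $\spaceB$, so that $\gate{U}_{\spaceB}$ really is a unitary on $\spaceB$ alone rather than on some enlarged ancilla; every other step is a direct application of an already-stated lemma.
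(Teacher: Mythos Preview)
Your proof is correct and follows essentially the same route as the paper: Fuchs--van de Graaf to get $F\ge 1-\varepsilon$, Uhlmann to obtain an optimal purification $\ket{\psi'}$, the unitary equivalence of purifications on the same space to produce $\gate{U}_\spaceB$, and Fuchs--van de Graaf again on the pure states to reach $\sqrt{\varepsilon(2-\varepsilon)}$. The only difference is cosmetic ordering---the paper bounds $\Delta(\ket{\varphi},\ket{\psi'})$ first and then invokes the unitary, while you introduce $\gate{U}_\spaceB$ before the final trace-distance step---and your explicit remark about $\spaceB$ being large enough is a welcome clarification the paper leaves implicit.
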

\begin{proof}[Proof of Lemma~\ref{lemma:uhlmannfuchsvandegraaf}]
    Let $\rho_\spaceA,\sigma_\spaceA \in S(\spaceA)$ be two density operators, such that
    $\Delta(\rho_\spaceA,\sigma_\spaceA) \le \varepsilon$.
  By the first inequality of Lemma~\ref{lemma:fuchsvandegraaf} we get
  \begin{align}
  	\label{eq:fidlowerbound}
  	F(\rho_\spaceA,\sigma_\spaceA) \ge 1-\varepsilon
  	\,.
  \end{align}
  Let the state~$\ket{\varphi}_{\spaceA,\spaceB}$ be an arbitrary purification of the density operator~$\rho_\spaceA$. By Lemma~\ref{lemma:uhlmann}, there exists a purification~$\ket{\psi'}_{\spaceA,\spaceB}$ of the density operator~$\sigma_\spaceA$, such that
  \begin{align}
  	F(\rho_\spaceA,\sigma_\spaceA) = F(\ket{\varphi}_{\spaceA,\spaceB},\ket{\psi'}_{\spaceA,\spaceB})
  	\,.
  \end{align}
  Therefore, the fidelity lower bound~\eqref{eq:fidlowerbound} is also a lower bound for the fidelity between the pure states~$\ket{\varphi}_{\spaceA,\spaceB}$ and~$\ket{\psi'}_{\spaceA,\spaceB}$.
  Using this in the second inequality of Lemma~\ref{lemma:fuchsvandegraaf} yields
  \begin{align}
  	\Delta(\ket{\varphi}_{\spaceA,\spaceB},\ket{\psi'}_{\spaceA,\spaceB}) \le \sqrt{1-F(\ket{\varphi}_{\spaceA,\spaceB},\ket{\psi'}_{\spaceA,\spaceB})^2}
  	\,.
  \end{align}
  By squaring both sides and plugging in inequality~\eqref{eq:fidlowerbound} we get
  \begin{align}
  	\Delta(\ket{\varphi}_{\spaceA,\spaceB},\ket{\psi'}_{\spaceA,\spaceB})^2 &\le 1-F(\ket{\varphi}_{\spaceA,\spaceB},\ket{\psi'}_{\spaceA,\spaceB})^2\\
  	&\le 1-(1-\varepsilon)^2\\
  	&= \varepsilon (2-\varepsilon)
  	\,.
  \end{align}
  Because purifications are equivalent up to unitary transformations on the purifying system, we thus get
  \begin{align}
      \Delta
    \left(
      \ket{\varphi}_{\spaceA,\spaceB},
      \left(
        \id_\spaceA \otimes \gate{U}_\spaceB
      \right)
      \ket{\psi}_{\spaceA,\spaceB}
    \right)
    \le
    \sqrt{\varepsilon(2-\varepsilon)}
    \,,
  \end{align}
  where
  \begin{align}
  \left(
        \id_\spaceA \otimes \gate{U}_\spaceB
      \right)
      \ket{\psi}_{\spaceA,\spaceB}
      =
      \ket{\psi'}_{\spaceA,\spaceB}
      \,.
  \end{align}
\end{proof}

\subsection{The Schmidt decomposition and its properties}
The Schmidt compression allows for a lossless compression of a quantum state.
We first describe the Schmidt decomposition.
\begin{theorem}[Schmidt decomposition]
Let~$\ket{\psi}_{\spaceA,\spaceB}$ be a pure state shared between party~$\partyA$ and party~$\partyB$.
Then there exists a set of orthonormal pure states~$\{\ket{a_i}_\spaceA\}$ for party~$\partyA$, a set of orthonormal pure states~$\{\ket{b_i}_\spaceB\}$ for party~$\partyB$, a set of real coefficients~$\{\lambda_i\}$ called \emph{Schmidt coefficients}, and a positive integer~$r$ called \emph{Schmidt rank}, such that
\begin{align}
	\ket{\psi}_{\spaceA,\spaceB}
	=
	\sum_{i=1}^r
	\lambda_i
	\ket{a_i}_\spaceA
	\ket{b_i}_\spaceB
	\,.
\end{align}
\end{theorem}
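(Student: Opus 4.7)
The plan is to obtain the decomposition via the singular value decomposition (SVD) of the matrix of coefficients of $\ket{\psi}_{\spaceA,\spaceB}$ in some fixed product basis. The whole argument is essentially a translation of SVD from matrix language into Dirac notation, so I do not anticipate a serious obstacle; the only point requiring care is ensuring that the Schmidt coefficients come out real and non-negative, and that the two sets $\{\ket{a_i}\}$ and $\{\ket{b_i}\}$ are indeed orthonormal.

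First I would fix arbitrary orthonormal bases $\{\ket{j}_\spaceA\}$ of $\spaceA$ and $\{\ket{k}_\spaceB\}$ of $\spaceB$ and expand
\begin{align}
  \ket{\psi}_{\spaceA,\spaceB}
  = \sum_{j,k} c_{jk}\,\ket{j}_\spaceA \ket{k}_\spaceB
  \,,
\end{align}
collecting the coefficients into a matrix $C$ with entries $C_{jk} = c_{jk}$. Applying the SVD to $C$ gives $C = U D V^\dagger$ with $U$ and $V$ unitary and $D$ diagonal with non-negative real entries $\lambda_1, \ldots, \lambda_r$, where $r$ is the rank of $C$. Substituting $c_{jk} = \sum_{i=1}^r U_{ji}\lambda_i (V^\dagger)_{ik}$ into the expansion and rearranging yields
\begin{align}
  \ket{\psi}_{\spaceA,\spaceB}
  = \sum_{i=1}^r \lambda_i
    \Bigl(\sum_j U_{ji}\ket{j}_\spaceA\Bigr)
    \Bigl(\sum_k (V^\dagger)_{ik}\ket{k}_\spaceB\Bigr)
  \,.
\end{align}

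Then I would define $\ket{a_i}_\spaceA := \sum_j U_{ji}\ket{j}_\spaceA$ and $\ket{b_i}_\spaceB := \sum_k (V^\dagger)_{ik}\ket{k}_\spaceB$, so that the decomposition takes the stated form. Orthonormality of $\{\ket{a_i}\}$ follows from the unitarity of $U$ (the columns of $U$ are orthonormal), and orthonormality of $\{\ket{b_i}\}$ follows from the unitarity of $V$ (the rows of $V^\dagger$ are orthonormal). The coefficients $\lambda_i$ are the singular values of $C$, hence real and non-negative, and $r$ is their number, matching the notion of Schmidt rank. The only mild subtlety is that the SVD is usually stated for rectangular matrices, so one should note that $C$ may be non-square when $\dim(\spaceA)\ne\dim(\spaceB)$; this causes no issue since SVD applies in that generality and the sum terminates at $r \le \min(\dim(\spaceA),\dim(\spaceB))$.
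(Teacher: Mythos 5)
Your proof is correct: the standard SVD argument, with the columns of $U$ giving $\{\ket{a_i}\}$, the conjugated columns of $V$ giving $\{\ket{b_i}\}$, and the singular values giving the real non-negative Schmidt coefficients. The paper itself states this theorem as a known fact and offers no proof, so there is nothing to compare against; your argument is the standard textbook derivation and handles the one genuine subtlety (the rectangular case $\dim(\spaceA)\ne\dim(\spaceB)$) correctly.
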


Because the spaces~$\spaceA$ and~$\spaceB$ use only~$r$ different orthonormal pure states, both spaces can be compressed independently to spaces of dimension~$r$ with~$\lceil \log r\rceil$ qubits. This is known as \emph{Schmidt compression}.

The following theorem states that we can bound the Schmidt rank of a bipartite state resulting from a purified two-party protocol. This theorem is attributed to Kremer~\cite{Kre95} (see Lemma 5).

\begin{theorem}[bound on Schmidt rank]
	\label{thm:boundschmidtrank}
Let \mbox{$\Pi=\left(\pureA,\pureB,s\right)$} be a two-party quantum protocol with purified parties $\pureA$ and $\pureB$, and let $\inputState=\ket{\phi_0}_{\spaceA,\spaceB}$ be a pure product state. Suppose $\Pi$ has communication complexity~$c$. Then $[\pureA \circledast \pureB]\left(\inputState\right)$ has Schmidt rank at most~$2^c$.
\end{theorem}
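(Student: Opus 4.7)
The plan is to prove this by induction on the number of communication steps, tracking the Schmidt rank of the joint state across the bipartition between $\pureA$'s subsystems and $\pureB$'s subsystems throughout the execution of the protocol. The base case is immediate: the input is a pure product state $\ket{\phi_0}_{\spaceA,\spaceB}$ (possibly tensored with the ancillary purifying registers, which start in a fixed product state), so the initial Schmidt rank across the $\pureA|\pureB$ cut is $1 = 2^0$.

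For the inductive step, I would decompose each round into two kinds of actions: (i) a local unitary applied by one of the purified parties on its own side of the cut, and (ii) the transmission of a message register from one side to the other. Local unitaries act on only one side of the bipartition and therefore preserve the Schmidt rank exactly. The only operation that can change the Schmidt rank is the movement of a message register, so the key lemma is the following: if $\ket{\psi}$ has Schmidt rank $r$ across some bipartition $\mathcal{X}\mathcal{Z}|\mathcal{Y}$ and $\dim(\mathcal{Z}) = d$, then its Schmidt rank across $\mathcal{X}|\mathcal{Z}\mathcal{Y}$ is at most $r\cdot d$.

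To establish this lemma, I would write the Schmidt decomposition before the message is sent as
\begin{align}
\ket{\psi} = \sum_{i=1}^r \lambda_i \ket{a_i}_{\spaceX\spaceZ}\ket{b_i}_{\spaceY}\,,
\end{align}
expand each $\ket{a_i}_{\spaceX\spaceZ}$ in a computational basis of $\spaceZ$ as $\ket{a_i}_{\spaceX\spaceZ} = \sum_{j=1}^{d} \ket{\alpha_{ij}}_\spaceX \ket{j}_\spaceZ$, and regroup to obtain
\begin{align}
\ket{\psi} = \sum_{i=1}^r \sum_{j=1}^{d} \lambda_i \ket{\alpha_{ij}}_\spaceX \otimes \ket{j}_\spaceZ \ket{b_i}_\spaceY\,,
\end{align}
which is a sum of at most $rd$ product terms across $\spaceX | \spaceZ\spaceY$. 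Since the Schmidt rank is the minimum number of product terms in any such decomposition, the new Schmidt rank is at most $rd$, as claimed.

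Applying the lemma to each message: the $k$-th transmitted register $\spaceX_k$ (resp.\ $\spaceY_k$) has dimension $2^{\log \dim \spaceX_k}$ (resp.\ $2^{\log \dim \spaceY_k}$), and so multiplies the current Schmidt rank by that factor. Combining this with the invariance of the Schmidt rank under local operations, and iterating over all $s$ rounds, the final Schmidt rank is bounded by
\begin{align}
1 \cdot \prod_{k=1}^{s} 2^{\log \dim \spaceX_k} \cdot \prod_{k=1}^{s-1} 2^{\log \dim \spaceY_k} = 2^c\,,
\end{align}
where $c$ is precisely the communication complexity as defined earlier. There is no serious obstacle here; the only thing to be careful about is that the bipartition moves at each transmission, and that the purifying ancillas stay on their respective sides throughout, so the invariant ``Schmidt rank $\le 2^{\text{communication so far}}$ across $\pureA|\pureB$'' is maintained from round to round.
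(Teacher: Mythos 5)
Your proof is correct and takes essentially the same approach as the paper: both arguments note that local unitaries preserve the Schmidt rank and then bound the growth caused by each transmission by expanding the Schmidt vectors in a basis of the transmitted register and counting product terms. The only cosmetic difference is that you handle a whole message register of dimension $d$ at once (gaining a factor $d$), whereas the paper argues qubit by qubit (each transmitted qubit at most doubles the number of summands); the two bookkeepings yield the same bound $2^c$.
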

\begin{proof}[Proof of Theorem~\ref{thm:boundschmidtrank}]

In the following, we ignore unitary operations on either side during the protocol, because such operations do not increase the Schmidt rank.

Let~$\ket{\phi_d}_{\spaceA,\spaceB}$ be the shared state after~$d$ qubits have been communicated and let
	\begin{align}
		\label{eq:dschmidtdecomposition}
		\ket{\phi_d}_{\spaceA,\spaceB}
		=
		\sum_{i=1}^{r}
		\lambda_i
		\ket{a_i}_\spaceA
		\ket{b_i}_\spaceB
	\end{align}
	be the corresponding Schmidt decomposition.
	The terms belonging to party~$\partyA$ from the Schmidt decomposition~\eqref{eq:dschmidtdecomposition} can be expanded as
	\begin{align}
		\label{eq:catstate}
		\ket{a_i}_\spaceA = \alpha_i \ket{a^0_i}_{\spaceA_\ell} \ket{0}_{\spaceA_r} + \beta_i \ket{a^1_i}_{\spaceA_\ell} \ket{1}_{\spaceA_r}
		\,.
	\end{align}
	Without loss of generality, assume that in the next step in the protocol, the qubit from the space~$\spaceA_r$ is sent from party~$\partyA$ to party~$\partyB$.
	By plugging in the expanded expression~\eqref{eq:catstate} into the Schmidt decomposition~\eqref{eq:dschmidtdecomposition}, we get
	\begin{align}
		\ket{\phi_d}_{\spaceA,\spaceB}
		=
		\sum_{i=1}^{r}
		\lambda_i
		\left(\alpha_i \ket{a^0_i}_{\spaceA_\ell} \ket{0}_{\spaceA_r} + \beta_i \ket{a^1_i}_{\spaceA_\ell} \ket{1}_{\spaceA_r}\right)
		\ket{b_i}_\spaceB
		\\=
		\sum_{i=1}^{r}
		\lambda_i
		\alpha_i \ket{a^0_i}_{\spaceA_\ell} \ket{0}_{\spaceA_r}
		\ket{b_i}_\spaceB
		+
		\lambda_i
		\beta_i \ket{a^1_i}_{\spaceA_\ell} \ket{1}_{\spaceA_r}
		\ket{b_i}_\spaceB
		\,.
	\end{align}
	Hence the transmission of one qubit at most doubles the number of summands, which is an upper bound of the Schmidt rank of the new Schmidt decomposition into the spaces~$\spaceA_\ell$ and~$\spaceA_r \otimes \spaceB$.
	By assumption, the initial state~$\ket{\phi_0}_{\spaceA,\spaceB}$ has Schmidt rank~$1$. Therefore, after communicating~$c$ qubits the Schmidt rank is at most~$2^c$.
\end{proof}
\subsection{Random access encoding}

A random access encoding is an encoding of classical database as a density operator, such that any database item can be extracted with a certain probability using a measurement which is independent of the database.
It is easy to see that the message of a single-message QPIR protocol is a random access encoding of the server's database.
We state the definition of random access encoding and a theorem on their size; here, we consider the average case scenario, which follows from Nayak's work~\cite{Nay99} (see also Kerenidis and de Wolf~\cite{KdW04}, Appendix~B).
\begin{definition}[Random Access Encoding]
  An \mbox{\emph{$(n,m,p)$-random access encoding}} is a function~$f$ that maps~$n$-bit strings to density operators over~$m$ qubits, such that,
  for every $i\in\left\{ 1,\dots,n \right\}$, there exists a measurement $\party{M}_i$ with outcome~$0$ or~$1$ that has the property that on average over all $x\in\left\{ 0,1 \right\}^n$,
  \begin{align}
    \Pr\left[ \party{M}_i\left( f\left( x \right) \right)=x_i \right] \ge p
    \,.
  \end{align}
\end{definition}

\begin{theorem}[size of Random Access Encoding]
  \label{thm:rac}
  Any $(n,m,p)$-random access encoding satisfies \mbox{$m\ge\left( 1-H_\text{bin}\left( p \right) \right)n$}.
\end{theorem}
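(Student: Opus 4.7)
The plan is to prove Theorem~\ref{thm:rac} via two standard information-theoretic tools that are not in the excerpt but are classical to this line of work: the Holevo bound and Fano's inequality, along the lines of Nayak's original argument. Let $X = X_1 \cdots X_n$ be drawn uniformly from $\{0,1\}^n$, let $Q$ denote the $m$-qubit register holding $f(X)$, and let $\rho = 2^{-n}\sum_x f(x)$ be the average encoded state.

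First I would apply the Holevo bound to the classical--quantum ensemble $\{2^{-n}, f(x)\}_x$ to obtain $I(X : Q) \le S(\rho) \le m$, where $S$ denotes the von Neumann entropy and the last inequality uses $\dim(Q)=2^m$. Next, because the bits of $X$ are independent (so $H(X)=n$) and conditional entropy with quantum side information is subadditive, this gives $I(X : Q) \ge \sum_{i=1}^n I(X_i : Q)$. For each $i$, let $B_i$ be the classical outcome of the guessing measurement $\party{M}_i$ applied to $Q$; since $B_i$ is a quantum-to-classical channel applied to $Q$, data processing yields $I(X_i : Q) \ge I(X_i : B_i)$.

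Finally, I would invoke Fano's inequality. The random-access hypothesis says that, over uniform $X$, $\Pr[B_i \ne X_i] \le 1-p$, and assuming $p \ge 1/2$ without loss of generality (otherwise $1-H_\text{bin}(p) \le 0$ and the conclusion is trivially true), the binary form of Fano gives $H(X_i \mid B_i) \le H_\text{bin}(1-p)=H_\text{bin}(p)$, hence $I(X_i : B_i) \ge 1 - H_\text{bin}(p)$. Chaining everything produces
\begin{equation*}
  m \;\ge\; I(X : Q) \;\ge\; \sum_{i=1}^n I(X_i : B_i) \;\ge\; n\bigl(1 - H_\text{bin}(p)\bigr),
\end{equation*}
which is the claim.

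The step I expect to require the most care is the chain-rule inequality $I(X : Q) \ge \sum_i I(X_i : Q)$ in the classical--quantum setting, since the excerpt has not set up conditional quantum mutual information. It is justified by writing out the joint classical--quantum state $\sum_x 2^{-n}\ket{x}\bra{x}\otimes f(x)$, using $H(X)=\sum_i H(X_i)$ from independence, and invoking subadditivity of conditional entropy ($H(X\mid Q) \le \sum_i H(X_i\mid Q)$); once this is in place, the rest of the argument is a direct concatenation of standard inequalities.
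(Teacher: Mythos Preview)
Your argument is correct and is precisely the standard Holevo--Fano derivation that underlies this bound. Note, however, that the paper does not supply its own proof of Theorem~\ref{thm:rac}: it simply quotes the result and defers to Nayak~\cite{Nay99} and to Appendix~B of Kerenidis and de~Wolf~\cite{KdW04}. The proof you have written is essentially the one found in those references, so there is nothing to compare beyond observing that you have reconstructed the cited argument faithfully; the chain-rule step you flag is indeed handled exactly as you describe, via independence of the $X_i$ together with subadditivity of conditional entropy for classical--quantum states.
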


\section{Main Theorem}
\newcommand{\result}{
\begin{align}
  \left(1-H_\text{bin}\left(1-\delta-2\sqrt{\varepsilon(1-\varepsilon)}\right)\right)n\,.
\end{align}
}
\label{sec:lowerbound}

In this section, we present our main result  and related corollaries. The proof is given in Section~\ref{sec:proof}.

\subsection{Results}
\label{sec:results}

Our main result is the following.

\begin{theorem}
  \label{thm:result}
  Let $\QPIRprot=\left(\partyA,\partyB,s\right)$ be an $s$-{round}, $n$-bit QPIR protocol, that is $(1-\delta)$-correct and ultimately \mbox{$(1-\varepsilon)$}-private against purified servers. Then~$\QPIRprot$ has communication complexity of at least
  \result
\end{theorem}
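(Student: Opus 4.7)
The plan is to reduce the hypothetical QPIR protocol to a quantum random access encoding of size $c$ (the total quantum communication) with success probability $p = 1-\delta-2\sqrt{\varepsilon(1-\varepsilon)}$, and then to apply Theorem~\ref{thm:rac} to conclude. The key tools will be Kremer's Schmidt-rank bound (Theorem~\ref{thm:boundschmidtrank}) and the Uhlmann-style transfer lemma (Lemma~\ref{lemma:uhlmannfuchsvandegraaf}).

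First I would purify both parties, replacing $\partyA$ and $\partyB$ by $\pureA$ and $\pureB$. On a pure product input $\inputState = \ket{x}\bra{x}_{\spaceA_0} \otimes \ket{i}\bra{i}_{\spaceB_0}$, the final state $\ket{\phi^{x,i}}$ is pure on the joint space $\spaceA \otimes \spaceB$ collecting the server's and client's final registers (including purifying ancillas). Theorem~\ref{thm:boundschmidtrank} bounds its Schmidt rank across the server--client cut by $2^c$, so the client's reduced state has rank at most $2^c$ and Schmidt-compresses onto a $c$-qubit register $\spaceC$. Correctness guarantees that the honest client measurement $\mathcal{M}_i$, applied on $\spaceB_s$ of $\ket{\phi^{x,i}}$, yields $x_i$ with probability at least $1-\delta$.

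Next I would invoke ultimate $(1-\varepsilon)$-privacy, which provides a simulator $\simulator$ such that $\tau^x := \simulator(\ket{x}\bra{x})$ is $\varepsilon$-close in trace distance to the server-side marginal of $\ket{\phi^{x,i}}\bra{\phi^{x,i}}$ for every $i$ (purifying the client does not alter this marginal). Fixing a dummy input $i^\star = 1$, the triangle inequality then gives that the server marginals for $i$ and $i^\star$ are $2\varepsilon$-close. Applying Lemma~\ref{lemma:uhlmannfuchsvandegraaf} to these two marginals, with $\ket{\phi^{x,i}}$ and $\ket{\phi^{x,i^\star}}$ as their respective purifications on the common register $\spaceB$, produces a unitary $W^{x,i}$ on $\spaceB$ satisfying
\begin{equation*}
\Delta\bigl(\ket{\phi^{x,i}},\,(\id\otimes W^{x,i})\ket{\phi^{x,i^\star}}\bigr)\le\sqrt{2\varepsilon(2-2\varepsilon)}=2\sqrt{\varepsilon(1-\varepsilon)}\,.
\end{equation*}
This inequality is precisely the source of the $2\sqrt{\varepsilon(1-\varepsilon)}$ term in the final bound.

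Finally I would define the encoding $f(x)$ to be the Schmidt-compressed client-side marginal of $\ket{\phi^{x,i^\star}}$ on $\spaceC$ (at most $c$ qubits), and the decoding measurement $M_i$ to consist of (i)~Schmidt-decompressing $\spaceC$ back into $\spaceB$, (ii)~applying the unitary $W^{x,i}$ to approximately convert the $i^\star$-instance into the $i$-instance, and (iii)~performing $\mathcal{M}_i$ on $\spaceB_s$. Monotonicity of trace distance under quantum operations combined with the inequality above gives that this decoder recovers $x_i$ with probability at least $1-\delta-2\sqrt{\varepsilon(1-\varepsilon)}$, and Theorem~\ref{thm:rac} then yields the stated lower bound. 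The hard part will be to ensure that the composed $M_i$ is truly independent of $x$, since both the Schmidt-decompression isometry and the unitary $W^{x,i}$ nominally depend on $x$; this must be handled by folding all $x$-dependence into $f(x)$ via a suitable canonical choice of purification, so that the effective decoder depends only on $i$ as the RAC definition requires.
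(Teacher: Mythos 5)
Your high-level strategy (purify, Schmidt-compress the client's register to $c$ qubits, use privacy plus Lemma~\ref{lemma:uhlmannfuchsvandegraaf} to rotate the $i^\star=1$ instance into the $i$ instance, then apply the correctness measurement and Theorem~\ref{thm:rac}) is the same as the paper's, and the $2\sqrt{\varepsilon(1-\varepsilon)}$ accounting is right. But the issue you flag at the end is not a technicality to be ``handled'' later --- it is the crux of the proof, and your proposed fix does not work. In an $(n,m,p)$-random access encoding the measurement $\party{M}_i$ may depend on $i$ but must be independent of $x$, whereas both your decompression isometry (built from the Schmidt decomposition of the fixed-$x$ state $\ket{\phi^{x,1}}$) and your Uhlmann unitary $W^{x,i}$ depend on $x$. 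You cannot fold $W^{x,i}$ into the encoding $f(x)$, because it also depends on $i$, which the encoder does not know; the whole point of PIR is that neither party holds both $x$ and $i$. Applying privacy only to classical inputs $\ket{x}\bra{x}$ gives you, for each $x$ separately, closeness of server marginals and hence a different Uhlmann unitary for each $x$, from which no single decoder can be extracted.

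The missing idea is to run the purified protocol on the \emph{uniform superposition of databases} $\ket{\xi}_\spaceDB = 2^{-n/2}\sum_x \ket{x}_\spaceDB$ (the privacy definition quantifies over all input states, so this is legitimate). Let $\ket{\nu_i}$ be the resulting final pure state. Applying Theorem~\ref{thm:boundschmidtrank} to $\ket{\nu_1}$ gives a \emph{single} compression isometry on $\spaceC$ that, by linearity, compresses every $\ket{\bar{\psi}_{x,1}}$ simultaneously; applying the privacy condition and Lemma~\ref{lemma:uhlmannfuchsvandegraaf} to $\ket{\nu_1}$ versus $\ket{\nu_i}$ gives a \emph{single} unitary $\gate{U}^{1\to i}_\spaceC$, independent of $x$, with $\Delta((\id\otimes\gate{U}^{1\to i}_\spaceC)\ket{\nu_1},\ket{\nu_i})\le 2\sqrt{\varepsilon(1-\varepsilon)}$. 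Measuring the database register $\spaceDB$ (held on the server side) then converts this single inequality into the statement that, \emph{on average over $x$}, $(\id\otimes\gate{U}^{1\to i}_\spaceC)\ket{\bar{\psi}_{x,1}}$ is $2\sqrt{\varepsilon(1-\varepsilon)}$-close to $\ket{\bar{\psi}_{x,i}}$ --- exactly the average-case guarantee Theorem~\ref{thm:rac} requires. Note this also explains why the bound fails in Le Gall's model, where privacy is only guaranteed for classical database inputs.
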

The above theorem is an extension of Nayak's result on QPIR~\cite{Nay99} to approximate privacy, and requiring security only against a purified server at the end of the protocol.
It is easy to see that a purified server is specious (see Section~\ref{sec:protocol}).
Therefore, any QPIR protocol that is \mbox{$(1-\varepsilon)$}-private against \mbox{$\gamma$-specious} servers is also \mbox{$(1-\varepsilon)$}-private against purified serves.
Trivially such a protocol is ultimately \mbox{$(1-\varepsilon)$}-private against purified servers.
Hence, by Theorem~\ref{thm:result} we get.
\begin{corollary}
\label{cor:specious}
  Let $\QPIRprot=\left(\partyA,\partyB,s\right)$ be an $s$-{round}, $n$-bit QPIR protocol that is  $(1-\delta)$-correct and \mbox{$(1-\varepsilon)$}-private against \mbox{$\gamma$-specious}  servers. Then for any~$\gamma$, $\QPIRprot$ has communication complexity of at least
  \result
\end{corollary}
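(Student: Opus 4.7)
My plan is to reduce the QPIR protocol to a random access encoding (RAC) of $x$ on at most $c$ qubits, where $c$ is the total communication complexity, and then invoke Theorem~\ref{thm:rac}.

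I would first pass to the purified protocol $\Pi=(\pureA,\pureB,s)$, in which both parties act by unitaries on their registers together with a purifying ancilla. For each pair of inputs $(x,i)\in\{0,1\}^n\times\{1,\dots,n\}$, let $\ket{\psi_{x,i}}\in\spaceA_s\otimes\spaceApure_s\otimes\spaceB_s\otimes\spaceBpure_s$ be the resulting pure final state, and write $\rho^c_{x,i}:=\tr_{\spaceA_s,\spaceApure_s}\egoketbra{\psi_{x,i}}$ and $\sigma_{x,i}:=\tr_{\spaceB_s,\spaceBpure_s}\egoketbra{\psi_{x,i}}$. Two facts combine to give the reduction. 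Theorem~\ref{thm:boundschmidtrank} bounds the Schmidt rank of $\ket{\psi_{x,i}}$ across the server/client cut by $2^c$, so (after Schmidt compression) $\rho^c_{x,i}$ is supported on a Hilbert space of dimension at most $2^c$, i.e., on $c$ qubits. Ultimate privacy delivers a simulator $\simulator$ with $\sigma_x:=\simulator(\egoketbra{x})$ satisfying $\Delta(\sigma_{x,i},\sigma_x)\le\varepsilon$ for every $i$; the triangle inequality then gives $\Delta(\sigma_{x,1},\sigma_{x,i})\le 2\varepsilon$.

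The central step is Lemma~\ref{lemma:uhlmannfuchsvandegraaf} applied to $(\sigma_{x,1},\sigma_{x,i})$ with purifications $(\ket{\psi_{x,1}},\ket{\psi_{x,i}})$: it produces a unitary $\gate{U}^x_i$ acting only on the client's space such that
\begin{align*}
\Delta\bigl(\ket{\psi_{x,1}},(\id\otimes\gate{U}^x_i)\ket{\psi_{x,i}}\bigr)\le\sqrt{2\varepsilon(2-2\varepsilon)}=2\sqrt{\varepsilon(1-\varepsilon)}\,.
\end{align*}
Since trace distance upper-bounds any measurement's distinguishing advantage, applying $(\gate{U}^x_i)^\dagger$ on the client's side of $\rho^c_{x,1}$ followed by the correctness measurement $\mathcal{M}_i$ recovers $x_i$ with probability at least $1-\delta-2\sqrt{\varepsilon(1-\varepsilon)}$. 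Reading $f(x):=\rho^c_{x,1}$ as the encoding and this procedure as the decoder $\party{M}_i$, Theorem~\ref{thm:rac} then yields $c\ge\bigl(1-H_\text{bin}(1-\delta-2\sqrt{\varepsilon(1-\varepsilon)})\bigr)n$.

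The main obstacle is that the Uhlmann unitary $\gate{U}^x_i$ depends a priori on $x$, while the RAC definition requires the decoder $\party{M}_i$ to be $x$-independent. The plan for this step is to argue that the decoder can be realized by a single, $x$-independent quantum channel---for instance, by coherently preparing a canonical purification of the input $\rho^c_{x,1}$ on a fresh ancilla and then applying an isometry determined solely by $i$ before measuring---so that the bound of Theorem~\ref{thm:rac} legitimately applies to $\{f(x)\}_x$. Once this is done, tracking the accumulation of the privacy, correctness, and Fuchs--van de Graaf errors into the single parameter $1-\delta-2\sqrt{\varepsilon(1-\varepsilon)}$ is routine.
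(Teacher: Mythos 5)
Your overall skeleton (purify, bound the Schmidt rank, compress the client's register, use privacy plus Uhlmann to map the index-$1$ state to the index-$i$ state, then invoke Theorem~\ref{thm:rac}) matches the paper's, but the step you yourself flag as ``the main obstacle'' is a genuine gap, and the fix you sketch does not work. Applying Lemma~\ref{lemma:uhlmannfuchsvandegraaf} to the pair $(\sigma_{x,1},\sigma_{x,i})$ produces a unitary that genuinely depends on $x$: Uhlmann's theorem aligns the two specific purifications at hand, and there is no reason a single client-side isometry works for all $x$ simultaneously. Your proposed repair---coherently preparing a ``canonical purification'' of the input $\rho^c_{x,1}$ on a fresh ancilla---is not a physical operation: the decoder holds one copy of an unknown mixed state and cannot manufacture a purification of it (that would require knowing $x$), and even abstractly the required isometry for the pair $(\sigma_{x,1},\sigma_{x,i})$ still depends on $x$. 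The same $x$-dependence already infects your compression step: Theorem~\ref{thm:boundschmidtrank} applied to each $\ket{\psi_{x,i}}$ separately gives $2^n$ different $2^c$-dimensional supports whose union need not fit in $c$ qubits, and the decompression map must also be $x$-independent.

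The missing idea---which is exactly where the paper exploits the fact that privacy must hold for arbitrary quantum inputs---is to run the purified protocol on the \emph{uniform superposition} of databases $\ket{\xi}_\spaceDB$, yielding final states $\ket{\nu_i}_{\spaceS,\spaceC}=2^{-n/2}\sum_x\ket{\bar{\psi}_{x,i}}_{\spaceS,\spaceC}$. Privacy applied to this single input plus the triangle inequality gives $\Delta(\tr_\spaceC\egoketbra{\nu_1},\tr_\spaceC\egoketbra{\nu_i})\le 2\varepsilon$, and Lemma~\ref{lemma:uhlmannfuchsvandegraaf} then yields \emph{one} unitary $\gate{U}^{1\to i}_\spaceC$, independent of $x$, with $\Delta((\id_\spaceS\otimes\gate{U}^{1\to i}_\spaceC)\ket{\nu_1},\ket{\nu_i})\le 2\sqrt{\varepsilon(1-\varepsilon)}$. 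Measuring the database register $\spaceDB$ (which sits on the server's side) cannot increase the trace distance, so the same bound holds \emph{on average} over uniform $x$ for the pairs $(\ket{\bar{\psi}_{x,1}},\ket{\bar{\psi}_{x,i}})$---precisely the average-case guarantee that the $(n,m,p)$-random-access-encoding definition and Theorem~\ref{thm:rac} require. Likewise, the Schmidt compression is derived from $\ket{\nu_1}$ and extends by linearity to every $\ket{\bar{\psi}_{x,1}}$. With this substitution your error accounting ($1-\delta-2\sqrt{\varepsilon(1-\varepsilon)}$) and the final appeal to Theorem~\ref{thm:rac} go through; note also that the corollary itself follows from Theorem~\ref{thm:result} simply because a purified server is specious, so specious privacy implies ultimate privacy against purified servers.
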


Let $\delta$ and $\varepsilon$ be nonnegative and negligible functions\footnote{A nonnegative function $\mu$ is called \emph{negligible with respect to $n$} if for all $c > 0$ and all sufficiently large $n$, $\mu(n) < n^{-c}$.} with respect to~$n$. Then for any~$\gamma$,  the communication complexity as given in Corollary~\ref{cor:specious} at least $n-\littleo{1}$. In sharp contrast to this,
in Le Gall's model (that considers an adversary that follows the protocol exactly), the communication complexity is~$\bigO{\sqrt{n}}$; we therefore obtain the following corollary.
\begin{corollary} \label{coro:LeGall}
  Le Gall's QPIR protocol is not private against $\gamma$-specious adversaries, for any~$\gamma$.
\end{corollary}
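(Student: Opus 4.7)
The plan is to derive Corollary~\ref{coro:LeGall} immediately from Corollary~\ref{cor:specious} by a short contradiction argument. First I would record the two relevant facts about Le Gall's protocol~\cite{Gall12}: it is $(1-\delta)$-correct with $\delta$ negligible in $n$, and its communication complexity is $\bigO{\sqrt{n}}$, hence $\littleo{n}$. I would also make explicit the convention (inherited from the paragraph preceding the corollary) that ``private'' here means $(1-\varepsilon)$-private for some $\varepsilon$ that is negligible in $n$.

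For the main step, suppose toward contradiction that there exist some $\gamma \ge 0$ and some negligible $\varepsilon$ such that Le Gall's protocol is $(1-\varepsilon)$-private against $\gamma$-specious servers. The crucial feature of Corollary~\ref{cor:specious} that I would invoke is that its communication lower bound $(1 - H_\text{bin}(1 - \delta - 2\sqrt{\varepsilon(1-\varepsilon)}))n$ is uniform in $\gamma$; applying it to Le Gall's protocol with parameters $\delta$ and $\varepsilon$ forces the communication complexity to be at least this quantity.

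I would then finish with a routine asymptotic computation: as $\delta$ and $\varepsilon$ both tend to $0$ (they are negligible in $n$), the argument of $H_\text{bin}$ tends to $1$, and by continuity $H_\text{bin}$ tends to $0$, so the lower bound becomes $n - \littleo{1}$. This is linear in $n$ and therefore contradicts the $\bigO{\sqrt{n}}$ communication of Le Gall's protocol for all sufficiently large $n$. Since no such pair $(\gamma,\varepsilon)$ can exist, the protocol fails to be private against $\gamma$-specious adversaries for every $\gamma$, which is exactly the statement of Corollary~\ref{coro:LeGall}. The only substantive observation in the proof is the uniformity in $\gamma$ of the main lower bound, which lets a single contradiction rule out privacy at every level of speciousness simultaneously; beyond that, there is no genuinely difficult step.
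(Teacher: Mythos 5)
Your proposal is correct and follows essentially the same route as the paper: apply Corollary~\ref{cor:specious} with negligible $\delta$ and $\varepsilon$, note the bound $n-\littleo{1}$ is uniform in $\gamma$, and contradict the $\bigO{\sqrt{n}}$ communication of Le Gall's protocol. (The paper also mentions an alternate proof via an explicit specious attack in~\cite{Baumeler12}, but its in-text derivation is exactly your argument.)
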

An alternate proof of Corollary~\ref{coro:LeGall}, via an explicit specious attack, can be found in the thesis of \"A.\,B.~\cite{Baumeler12}.

\subsection{Proof of Theorem~\ref{thm:result}}
\label{sec:proof}
The main technique used in the proof of Theorem~\ref{thm:result} is to reduce a given  QPIR protocol to a random access encoding, and then apply Nayak's lower bound as established by Theorem~\ref{thm:rac}. This is the same technique as used by Nayak in his lower bound proof for QPIR, which we extend here to the case of approximate privacy against ultimately specious servers.

As a  starting point to understanding the reduction, note that any single-message QPIR protocol (where one message is sent from the server to the client) implements a random access encoding.
Hence, the lower bound on the size of the random access encoding is also a lower bound on the communication complexity for the single-message QPIR protocol.
We generalize this idea to ultimately $(1-\varepsilon)$-private against purified servers, multi-round QPIR protocols by reducing the multi-round protocol to a single-message protocol, and hence to a random access encoding. Taking care that this procedure does not increase the amount of communication allows us to apply the lower bound on the size of the random access encoding to the communication complexity of the multi-step QPIR protocol, thus establishing the result.

\begin{proof}[Proof of Theorem~\ref{thm:result}]
Let $\QPIRprot$ be an  $s$-round, $n$-bit, $(1-\delta)$-correct Quantum Private Information Retrieval protocol that is ultimately $(1-\varepsilon)$-private against purified servers and that has communication complexity~$c$.

Consider $\QPIRprotpure$, the modification of~$\QPIRprot$, where both parties,~$\partyA$ and~$\partyB$, are purified, as described in Section~\ref{sec:protocol}.
We denote by~$\spaceS\approx\spaceA_s\otimes\spaceApure_s$ the server's subspace, and by~$\spaceC\approx\spaceB_s\otimes\spaceBpure_s$ the client's subspace at the end of the protocol. Furthermore, let
\begin{equation}
  \egoketbra{\bar{\psi}_{x,i}}_{\spaceS,\spaceC} := [\pureA \circledast \pureB]\left(\ket{x}\bra{x}\otimes \ket{i}\bra{i}\right)\,;
  \end{equation}
that is,  $\ket{\bar{\psi}_{x,i}}_{\spaceS,\spaceC}$ is the global state at the end of the protocol~$\QPIRprotpure$, with inputs $x\in\left\{ 0,1 \right\}^n$ for the database and
$i\in\left\{ 1,\dots,n\right\}$ for the index.

\textbf{Encoding.}  Given~$\QPIRprotpure$, we derive a random access encoding  in the following way:
 the server simulates the purified version $\QPIRprotpure$ of the protocol $\QPIRprot$ with inputs $\ket{x}$ as database input  and index $\ket{i}=\ket{1}$.
The joint output is
	  $\ket{\bar{\psi}_{x,1}}_{\spaceS,\spaceC}$.

Consider $\ket{\xi}_\spaceDB$, the uniform superposition of all possible databases
\begin{align}
 \ket{\xi}_\spaceDB
  :=
  \frac{1}{\sqrt{2^n}}
  \sum_{x\in\left\{0,1\right\}^n}\ket{x}_\spaceDB\,,
\end{align}
and let
  \begin{align}
    \egoketbra{\nu_i}_{\spaceS,\spaceC}
    :&=
    [\pureA \circledast \pureB]\left(\egoketbra{\xi}_\spaceDB \otimes \egoketbra{i}\right)
    \,.
  \end{align}
  Since we consider the case where both parties are purified, the final global state is
  \begin{align}
    \ket{\nu_i}_{\spaceS,\spaceC}=
    \frac{1}{\sqrt{2^n}}
    \sum_{x\in\left\{ 0,1 \right\}^n}
    \ket{\bar{\psi}_{x,i}}_{\spaceS,\spaceC}
    \,.
  \end{align}

By Theorem \ref{thm:boundschmidtrank}, the Schmidt decomposition of $ \ket{\nu_i}_{\spaceS,\spaceC}$ into the subspace~$\spaceS$ and $\spaceC$ has Schmidt rank at most~$2^c$. Hence there exists a Schmidt compression of the subspace~$\spaceC$ into at most~$c$ qubits. By linearity, this map can be used to compress (and decompress)   $\ket{\bar{\psi}_{x,1}}_{\spaceS,\spaceC}$ for any $x \in \{0,1\}^n$. The server applies this compression on system $\spaceC$ of  $\ket{\bar{\psi}_{x,1}}_{\spaceS,\spaceC}$.
 Let the result of the compression be~$\ket{\bar{\psi}^c_{x,1}}_{\spaceS,\spaceCcompressed}$.
The server outputs as encoding of database~$x$ the state of the subsystem $\spaceCcompressed$:
\begin{align}
	  \tr_\spaceS\egoketbra{\bar{\psi}^c_{x,1}}_{\spaceS,\spaceCcompressed}
	  \,.
	\end{align}

\textbf{Decoding.} Given the output of the Encoding algorithm, the client applies the inverse operation of the Schmidt compression obtained above in order to recover the joint state corresponding to the
input $i=1$:
	\begin{align}
	  \ket{\bar{\psi}_{x,1}}_{\spaceS,\spaceC}
	  \,.
	\end{align}
However, the client would like to recover the joint state for an arbitrary $i$.
To this end, consider again $\ket{\xi}_\spaceDB$, the uniform superposition of databases as database input, and fix $i \in \{1, \ldots, n\}$ as index input. Let the corresponding input state be~$\inputState^{\xi,i}$. By the privacy condition (Equation~\ref{eq:privacy}), there exists a quantum map \mbox{$\simulator : L(\spaceA_0) \mapsto L(\spaceA_s \otimes \spaceApure_s)$}, such that
  \begin{equation}
  \label{eqn:privacy-proof}
  \Delta\left(	\tr_{\spaceB_0} \left(
  	  \simulator\otimes\id_{\spaceB_0,\spaceR}
  	\right)
  	(\inputState^{\xi,i}),
   \tr_{\spaceB_s}
  	[\pureA \circledast \partyB]\left(\inputState^{\xi,i}\right)
    \right)
    \le
    \varepsilon \,.
  \end{equation}
Since for all  $i \in \{1, \ldots, n\}$      \begin{equation}\tr_{\spaceB_0} \left(
  	  \simulator\otimes\id_{\spaceB_0,\spaceR}
  	\right)
  	(\inputState^{\xi,1}) = \tr_{\spaceB_0} \left(
  	  \simulator\otimes\id_{\spaceB_0,\spaceR}
  	\right)
  	(\inputState^{\xi,i})
  \end{equation}
and
\begin{align}
	  \tr_{\spaceB_s} [\pureA \circledast \partyB]\left(\inputState^{\xi,i}\right) &=   \tr_{\spaceC} [\pureA \circledast \pureB]\left(\inputState^{\xi,i}\right) \\&= \tr_{\spaceC}
      \egoketbra{\nu_i}_{\spaceS,\spaceC}\,,
\end{align}
  by the triangle inequality, we get that for all \mbox{$i \in \{1, \ldots, n\}$},
  \begin{equation}\Delta\left(
      \tr_{\spaceC}
      \egoketbra{\nu_1}_{\spaceS,\spaceC},
      \tr_{\spaceC}
      \egoketbra{\nu_i}_{\spaceS,\spaceC}
    \right)
    \le
    2\varepsilon
    \,.
\end{equation}

Thus by Lemma~\ref{lemma:uhlmannfuchsvandegraaf}, for every $i\in\left\{1,\dots,n\right\}$, there exists a unitary $\gate{U}^{1\to i}_\spaceC$ acting only on the client's subspace, such that
  \begin{align}
    \label{proofof:lowerbound.uisgood}
    \Delta
    \left(
    \left(
    \id_\spaceS
    \otimes
    \gate{U}^{1\to i}_\spaceC
    \right)
    \ket{\nu_1}_{\spaceS,\spaceC}
    ,
    \ket{\nu_i}_{\spaceS,\spaceC}
    \right)
    \le
    2 \sqrt{\varepsilon(1-\varepsilon)}
    \,.
  \end{align}
  Because the trace distance does not increase under measurements, we simply measure the space~$\spaceDB$ of the states from the inequality~\eqref{proofof:lowerbound.uisgood}
  and obtain that for a uniform random $x\in\left\{ 0,1 \right\}^n$
  \begin{align}
    \Delta
    \left(
    \left(
    \id_\spaceS
    \otimes
    \gate{U}^{1\to i}_\spaceC
    \right)
    \ket{\bar{\psi}_{x,1}}_{\spaceS,\spaceC}
    ,
    \ket{\bar{\psi}_{x,i}}_{\spaceS,\spaceC}
    \right)
    \le
    2 \sqrt{\varepsilon(1-\varepsilon)}
    \,.
  \end{align}
  Hence, on average over all databases $x\in\left\{ 0,1 \right\}^n$, this family $\left\{\gate{U}^{1\to i}_\spaceC\right\}_i$ of unitary operators can be used to construct a
  $2 \sqrt{\varepsilon(1-\varepsilon)}$-close approximation.

   It remains to calculate the recovery probability of the constructed random access code.
  The QPIR protocol~$\QPIRprot$ is~$(1-\delta)$-correct and hence there exists a measurement that recovers the desired bit with a probability of at least~$1-\delta$.
  The family of unitary approximation transformations~$\left\{ \gate{U}^{1\to i}_\spaceC \right\}_i$, used to approximate the global state, induces a loss in the recovery probability.  The approximation is $2\sqrt{\varepsilon(1-\varepsilon)}$-close.

  Hence  the QPIR protocol yields a  random access encoding with recovery probability of at least
  \mbox{$1-\delta-2\sqrt{\varepsilon(1-\varepsilon)}$}.
  By applying Nayak's Theorem~\ref{thm:rac}, we get that
  any $n$-bit, $(1-\delta)$-correct, ultimately \mbox{$(1-\varepsilon)$-private} against purified servers QPIR protocol has communication complexity of at least
  \begin{align}
    \left(1-H_\text{bin}\left( 1 - \delta - 2\sqrt{\varepsilon(1-\varepsilon)} \right)\right)n
    \,.
  \end{align}
\end{proof}

It is interesting to note that the reason why this lower bound proof is not applicable to the model in the work of Le Gall~\cite{Gall12}, is that there the privacy condition~\eqref{eq:privacy} does not hold.
In other words, the \emph{possibility} of Le Gall's result is a direct consequence of the fact that security is guaranteed only for classical inputs, that is, the adversary is forced to select a classical database at the beginning of the protocol, or equivalently, is forced to measure any superposition of databases that it might receive as input.

\section{Conclusion and Open Questions}
\label{sec:conclusion}
Using quantum computation and quantum communication, non-trivial information-theoretic single-server QPIR protocols secure against any reasonable adversary do not exist.
This work closes the topic of single-server and information-theoretic QPIR.

An open question that remains, is whether there exist other applications of the reduction from multi-step protocols to single-step protocols used in the proof of the lower bound (see Section~\ref{sec:lowerbound}).
In the reduction, we show that any protocol with asymmetric privacy at the end of the protocol against one particular type of adversaries, can be transformed to a single-step protocol.
The resulting single-step protocol preserves the communication complexity and the privacy property.
This reduction could potentially be used to build offline protocols from a multi-step protocols.
An offline protocol is a protocol, where the parties are not required to be involved in the protocol at the same time.
This could be advantageous under some circumstances.

\section{Acknowledgements}
\label{sec:ack}
We are grateful to Gus Gutoski, Robert K\"onig and Ashwin Nayak for helpful discussions, and to the anonymous referees for helpful comments. Furthermore, we thank S\'ebastien Gambs for introducing us to~PIR.\@  \"A.\,B.~thanks the Institute for Quantum Computing (IQC) and the University of Waterloo for hosting him for a six-month visit, during which these results were established. This work was performed while~A.\,B.~was at the Department of Combinatorics and Optimization, and at the Institute for Quantum Computing (IQC), University of Waterloo.
This work was supported by the Canadian Institute for Advanced Research (CIFAR), NSERC Frequency and Industry Canada.

\bibliographystyle{plain}
\bibliography{ref}

\begin{thebibliography}{10}

\bibitem{Baumeler12}
\"Amin Baumeler.
\newblock Quantum private information retrieval.
\newblock Master's thesis, ETH Z\"urich, 2012.
\newblock Available at:
  \url{http://e-collection.library.ethz.ch/view/eth:6297}.

\bibitem{BB84}
Charles~H. Bennett and Gilles Brassard.
\newblock Quantum cryptography: Public key distribution and coin tossing.
\newblock In {\em Proceedings of the International Conference on Computers,
  Systems, and Signal Processing}, pages 175--180, 1984.

\bibitem{CK11}
Andr{\'e} Chailloux and Iordanis Kerenidis.
\newblock Optimal bounds for quantum bit commitment.
\newblock In {\em Proceedings of the 52th Annual Symposium on Foundations of
  Computer Science}, FOCS `11, pages 354--362, 2011.

\bibitem{CKGS98}
Benny Chor, Eyal Kushilevitz, Oded Goldreich, and Madhu Sudan.
\newblock Private information retrieval.
\newblock {\em Journal of the ACM}, 45(6):965--981, Nov 1998.

\bibitem{DNS10}
Fr\'{e}d\'{e}ric Dupuis, Jesper~Buus Nielsen, and Louis Salvail.
\newblock Secure two-party quantum evaluation of unitaries against specious
  adversaries.
\newblock In {\em Proceedings of the 30th Annual Conference on Advances in
  Cryptology}, CRYPTO `10, pages 685--706. Sprinter-Verlag, 2010.

\bibitem{GLM08}
Vittorio Giovannetti, Seth Lloyd, and Lorenzo Maccone.
\newblock Quantum private queries.
\newblock {\em Physical Review Letters}, 100(23):230502, Jun 2008.

\bibitem{GW07}
Gus Gutoski and John Watrous.
\newblock Toward a general theory of quantum games.
\newblock In {\em Proceedings of the 39th Annual ACM Symposium on Theory of
  Computing}, STOC `07, pages 565--574, 2007.

\bibitem{JRS02}
Rahul Jain, Jaikumar Radhakrishnan, and Pranab Sen.
\newblock A property of quantum relative entropy with an application to privacy
  in quantum communication.
\newblock {\em Journal of the ACM}, 56(6):33, Sep 2009.
\newblock Preliminary version in FOCS `02.

\bibitem{JSG10}
Markus Jakobi, Christoph Simon, Nicolas Gisin, Jean-Daniel Bancal, Cyril
  Branciard, Nino Walenta, and Hugo Zbinden.
\newblock Practical private database queries based on a
  quantum-key-distribution protocol.
\newblock {\em Physical Review A}, 83:022301, Feb 2011.

\bibitem{KdW04}
Iordanis Kerenidis and Ronald de~Wolf.
\newblock Exponential lower bound for 2-query locally decodable codes via a
  quantum argument.
\newblock {\em Journal of Computer and System Sciences}, 69(3):395--420, Nov
  2004.

\bibitem{KdW04sym}
Iordanis Kerenidis and Ronald de~Wolf.
\newblock Quantum symmetrically-private information retrieval.
\newblock In {\em Information Processing Letters}, volume~90, pages 109--114,
  2004.

\bibitem{Kre95}
Ilan Kremer.
\newblock Quantum communication.
\newblock Master's thesis, The Hebrew University of Jerusalem, Mar 1995.
\newblock Available at: \url{http://www.cs.huji.ac.il/~noam/kremer-thesis.ps}.

\bibitem{Gall12}
Fran\c{c}ois Le~Gall.
\newblock Quantum private information retrieval with sublinear communication
  complexity.
\newblock {\em Theory of Computing}, 8(1):369--374, Jul 2012.

\bibitem{LC97}
Hoi-Kwong Lo and Hoi~Fung Chau.
\newblock Is quantum bit commitment really possible?
\newblock {\em Physical Review Letters}, 78(17):3410--3413, Apr 1997.

\bibitem{May97}
Dominic Mayers.
\newblock Unconditionally secure quantum bit commitment is impossible.
\newblock {\em Physical Review Letters}, 78(17):3414--3417, Apr 1997.

\bibitem{Nay99}
Ashwin Nayak.
\newblock Optimal lower bounds for quantum automata and random access codes.
\newblock In {\em Proceedings of the 40th Annual Symposium on Foundations of
  Computer Science}, FOCS `99, pages 369--376, 1999.

\bibitem{Shor97}
Peter Shor.
\newblock Polynomial-time algorithms for prime factorization and discrete
  logarithms on a quantum computer.
\newblock {\em SIAM Journal on Computing}, 26(5):1484--1509, Nov 1997.

\bibitem{NOCLONING}
William~K. Wootters and Wojciech~H. Zurek.
\newblock A single quantum cannot be cloned.
\newblock {\em Nature}, 299(5886):802--803, Oct 1982.

\end{thebibliography}

\end{document}